\newtheorem{lem}{Lemma}
\newtheorem{ass}{Assumption}
\newtheorem{theorem}{Theorem}
\newtheorem{defn}{Definition}
\newtheorem{rem}{Remark}
\def\mb{\mathbf}
\def\mc{\mathcal}
\begin{document}
\title{\bf On the Design of Resilient Distributed \\ Single Time-Scale Estimators: A Graph-Theoretic Approach}
\author{Mohammadreza  Doostmohammadian,  Mohammad Pirani 
	\IEEEcompsocitemizethanks{\IEEEcompsocthanksitem M.~Doostmohammadian is with the Mechatronics Department, Faculty of Mechanical Engineering, Semnan University, Semnan, Iran {\texttt{doost@semnan.ac.ir}}.
		\IEEEcompsocthanksitem M. Pirani is with the Department of Mechanical Engineering, University of Ottawa, Canada, email: {\texttt{mpirani@uottawa.ca}}.
	}
}

\markboth{IEEE Transaction on Network Science and Engineering}%
{Doostmohammadian \MakeLowercase{\textit{et al.}}: On the Design of Resilient Distributed Estimators }

\IEEEtitleabstractindextext{
	\begin{abstract}
	Distributed estimation in interconnected systems has gained increasing attention due to its relevance in diverse applications such as sensor networks, autonomous vehicles, and cloud computing. In real practice, the sensor network may suffer from communication and/or sensor failures. This might be due to cyber-attacks, faults, or environmental conditions. Distributed estimation resilient to such conditions is the topic of this paper. By representing the sensor network as a graph and exploiting its inherent structural properties, we introduce novel techniques that enhance the robustness of distributed estimators.  As compared to the literature, the proposed estimator (i) relaxes the network connectivity of most existing single time-scale estimators and (ii) reduces the communication load of the existing double time-scale estimators by avoiding the inner consensus loop. 
	On the other hand, the sensors might be subject to faults or attacks, resulting in biased measurements. Removing these sensor data may result in observability loss. Therefore, we propose resilient design on the definitions of $q$-node-connectivity and $q$-link-connectivity, which capture robust strong-connectivity under link or sensor node failure. By proper design of the sensor network, we prove Schur stability of the proposed distributed estimation protocol under failure of up to $q$ sensors or $q$ communication links. 
	\end{abstract}

\begin{IEEEkeywords}
	Distributed observability, graph theory, networked estimation, consensus, $q$-node-connected networks, $q$-link-connected networks
\end{IEEEkeywords}

}
\maketitle
\IEEEdisplaynontitleabstractindextext
\IEEEpeerreviewmaketitle

\IEEEraisesectionheading{\section{Introduction}\label{sec_intro}}
\IEEEPARstart{R}{esilient} design of sensor networks for monitoring purposes has recently gained great interest in control and signal processing literature \cite{sundaram_2021resilient,PIRANI2023111264,battilotti2018distributed,mohammadi2015distributed}. In the era of interconnected and distributed systems, the need for accurate and resilient estimators has become more important. From monitoring complex networks to managing critical infrastructure, distributed estimators play a key role in providing real-time insights and ensuring system stability. However, the design of such estimators faces significant challenges, including the vulnerability to node failures, communication bottlenecks, and adversarial attacks. Addressing these challenges requires a new perspective that combines advanced techniques from graph theory into the distributed estimation strategy. This paper explores a novel approach that uses the notion of graph theory to design resilient distributed estimators, providing a robust foundation for critical applications in various domains, including distributed target tracking \cite{tase,ennasr2016distributed,mohammadi2015distributed,ennasr2018distributed} and distributed fault detection and isolation (FDI) \cite{teixeira2014distributed,tcns20,hajshirmohamadi2019event,davoodi2013distributed,ferrari2011distributed}. The existing literature on distributed estimation are based on different approaches; namely, observable canonical decomposition method \cite{deng2023distributed}, secure distributed estimation method against data integrity attacks \cite{wu2021secure},  distributed consensus-based estimation under saturation constraint \cite{jin2023new}, edge-pinning-based secure distributed estimation \cite{liu2023fully}, and finite-time weighted least-squares estimation\cite{zhu2022optimal}. Most existing works in the literature are based on double time-scale estimation which requires many steps of consensus filter between two consecutive steps of system dynamics, see Fig.~\ref{fig_scale}. This implies a consensus loop in the algorithm which increases computation and communication load on sensors/agents and algorithm complexity, however, it may result in more noise reduction and improved accuracy of the estimation process. Overall, there is a trade-off between the communication/computation loads and the error performance of the algorithm. The other key assumption is on the system observability. Many works \cite{kar2013consensus,das2015distributed,Silm2020A,khan2010connectivity,chen2018internet} assume local observability of the system in the neighborhood of each sensor which requires access to more sensor measurements and increased measurement-sharing over the sensor network. This necessitates more network connectivity and, therefore, more communications among the sensor nodes.  

\begin{figure}[]
	\centering
	\includegraphics[width=3.5in]{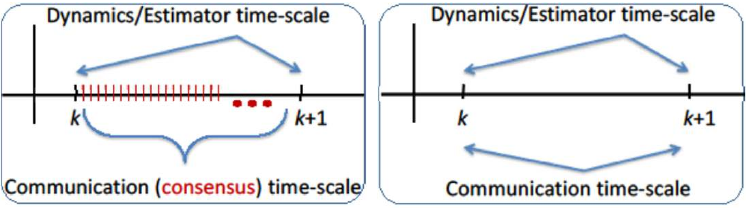}
	\caption{This figure illustrates two main strategies for distributed estimator/observer design in the literature. (Left) Double time-scale method with many steps of averaging (consensus) and communications between every two consecutive steps of system dynamics. The number of consensus steps is generally more than the diameter of the network. (Right) Single time-scale method (adopted in this work) with only one step of averaging (consensus) and communication, which imposes less communication and processing load on sensors/agents.}
	\label{fig_scale}
\end{figure}

In this work, we propose a new distributed estimation protocol that is single time-scale and has no inner consensus loop. In other words, the sensors perform only one step of  consensus (or averaging) between every two consecutive steps of system dynamics (in contrast to the double time-scale algorithms), see Fig.~\ref{fig_scale}. This estimator assumes no local observability, i.e., there is no assumption on the observability of the underlying system in the neighbourhood of any sensor. Further, in contrast to \cite{acc13}, the estimator only performs data-fusion on the a-priori estimates with no step of measurement sharing for full-rank systems. This significantly reduces the sensor network connectivity requirement.  
Our proposed distributed setup relaxes the network requirement to strong-connectivity and reaches convergence under such minimal connectivity. This allows for resilient estimator design by adding network redundancy. Note that the existing single time-scale estimation literature requires more than strong-connectivity. In the previous work \cite{acc13} a hub-based network is needed to satisfy distributed observability. Many other works
\cite{kar2013consensus,das2015distributed,Silm2020A,khan2010connectivity,chen2018internet} also assume observability in the neighbourhood of every sensor which mandates more than strong-connectivity. As we show in this paper strong-connectivity of the sensor network is the key to the resilient design and, therefore, many existing single time-scale distributed estimators in the literature cannot satisfy this property. On the other hand, double time-scale methods \cite{he2020secure,battilotti2021stability,olfati2007distributed} require an inner consensus loop to perform many steps of information-fusion between every two consecutive steps of system dynamics. In particular, the number of data-sharing steps needs to be more than the network diameter to satisfy system-observability requirements. This mandates a much faster communication rate than system dynamics which might be costly or infeasible in real-world applications. 

Our resilient design is based on the graph-theoretic notions of $q$-node-connectivity and $q$-link-connectivity. Following the strong-connectivity condition, by designing the network to be \textit{survivable} in the case of node or link failure, the resilient design guarantees that the condition for distributed observability holds. The case of node failure is more challenging. This is because we also need to ensure that centralized observability is preserved, which follows our relaxed observability setup and the concept of observational equivalence. By the use of observationally equivalent sensor measurements, we recover the loss of centralized observability while the survivable network design preserves connectivity requirements for distributed observability. There are many algorithms for survivable network design in the graph-theory literature, see \cite{galvez2021cycle,byrka2020breaching,cecchetto2021bridging,galvez2021approximation,gupta2011approximation} for example. Using these algorithms one can design the network to keep its connectivity in the case of node/link deletion or optimally add extra nodes/links to a given network to increase its $q$-connectivity.

We summarize our contributions in the following:
	\begin{itemize}
		\item The proposed single time-scale distributed estimator has no inner consensus loop and performs only one step of consensus filter between every two consecutive samples of the system dynamics. This significantly reduces the \textit{communication} and \textit{computational load} on sensors as compared to many existing literature on double time-scale estimators, where many steps of communication/consensus (more than the sensor network diameter) are required.
		\item The distributed estimator only requires global observability of the system to all sensors/agents, with \textit{no need of local observability} of the system to every individual agent. This significantly reduces the network connectivity and communication as compared to many existing literature on single time-scale estimators.
		\item The network connectivity requirement of the proposed distributed solution is more relaxed than the existing literature which allows for adding sufficient redundancy and survivable network design. This improves the \textit{resilience} of the estimator/observer to failure of certain number of agents and/or communication links.  
	\end{itemize}

The rest of the paper is organized as follows. Section~\ref{sec_fram} states the problem and preliminaries on system graph theory. Section~\ref{sec_res} provides the proposed resilient distributed estimator design and the stability analysis. Section~\ref{sec_sim} presents the simulation results and Section~\ref{sec_con} concludes the paper. 

\section{The Framework} \label{sec_fram}
\subsection{Statement of the Problem}
Given an LTI system, the problem is to design a distributed estimation network of sensors observing the system in a decentralized way while it is resilient to the removal (or failure) of up to $q$ sensor nodes. In other words, the sensor network and distributed estimation protocol need to be designed such that after removing up to $q$ sensor nodes the rest of the sensor network can infer/estimate the entire state of the system. The LTI system is modelled as:
\begin{align} \label{eq_syst}
	\mb{x}(k+1) = A \mb{x}(k)+ \nu
\end{align}
with $k$ as the time index, $\mb{x} \in \mathbb{R}^n$ as the system state, $A$ as the $n$-by-$n$ system matrix, and $\nu$ as the system noise (assumed to be Gaussian). This discrete-time model might be obtained via the discretization of analog systems. Then, the digital system matrix $A$ has non-zero diagonals and is structurally full-rank. The measurements by the sensors are:
\begin{align} \label{eq_yi}
	\mb{y}_i(k) = C_i \mb{x}(k)+ \zeta_i
\end{align}
with $\mb{y}_i,C_i$ as the local output vector and output matrix at sensor $i$ and $\zeta_i$ as Gaussian measurement noise. The global measurement matrix of all sensors is defined by column concatenation of all measurements as, 
\begin{align}\label{eq_y}
	\mb{y}(k) = C \mb{x}(k)+ \zeta
\end{align}
with $\mb{y} \in \mathbb{R}^m$ as the global output vector and $m$-by-$n$ matrix $C$ as the global output matrix. As required for any estimation and filtering scenario the pair $(A,C)$ is observable. In the graph observability perspective, the LTI system is modelled as a system graph and the necessary and sufficient conditions for observability are defined, for example, in \cite{icassp2016,acc13_kar,liu-pnas}. In networked estimation perspective, given the sensor network $\mc{G}$, the LTI system~\eqref{eq_syst} needs to be observable to the group of sensors $\mc{V}$ so they can estimate the entire state $\mb{x}$. We make no local observability assumption at any sensor, but this \textit{distributed} observability is gained via the proper topological design of the sensor network and networked estimation protocol. The problem in this work is the resilient design such that after deleting $q$ (or less) number of nodes the distributed observability at remaining sensors is preserved. Similarly, after removing $q$ (or less) links the distributed observability is preserved. 

\subsection{Preliminaries on System Graph Theory}

\subsubsection{Graph Theory Notations and Concepts}
The sensor network is modelled as a graph $\mc{G}=\{\mc{V},\mc{E}\}$ with the set of nodes $\mc{V}=\{1,\dots,m\}$ representing the sensors and the set of (directed or undirected) links $\mc{E}$ denoting the communication links for data-sharing between sensor nodes. The network is said to be strongly connected (SC) if for every pair of nodes $i,j$ there is a directed path from $i$ to $j$ and vice versa. If the network is not SC, it can be decomposed into strongly-connected-components (SCCs). Define an SCC as a component (or subgraph) in which there exists a directed path between every pair of nodes.

\subsubsection{Relevant Notions on Link/Node-Connectivity}
Here, we provide some relevant notions on link/node connectivity. First, define a $(i,j)$-cut in a graph as a
subset $S_{ij} \subset \mc{V} $ such that if the nodes $S_{ij}$ are removed, the resulting graph contains no path from node $i$ to
node $j$. Let $\kappa_{ij}$ denote the size of the smallest $(i,j)$-cut between any two nodes $i$ and $j$. The graph $\mc{G}$ is
said to have node-connectivity $\kappa(\mc{G})$ if $\kappa_{ij} \geq \kappa$ for all nodes $i, j$. Similarly, the link-connectivity $e(\mc{G})$ is the minimum number
of links whose removal makes the graph disconnected.

Another relevant notion is algebraic connectivity which gives an estimate of the node/link-connectivity of the network. Given the 0-1 adjacency matrix $\mc{W}$ (as the structure/pattern of the \textit{weighted} adjacency matrix $W$) associated with the graph $\mc{G}$ and its diagonal degree matrix $D$ (with diagonals $D_{ii}$ as the degree of node $i$) define its Laplacian as $L=D-W$. The algebraic connectivity $\lambda_2(L)$ is defined as the second smallest eigenvalue of $L$ which is greater than zero for connected graphs. Then, the range of the node connectivity $\kappa(\mc{G})$ and link connectivity $e(\mc{G})$ satisfy \cite{godsil}
\begin{align} \label{eq_kappa}
	\lambda_2(L) \leq \kappa(\mc{G}) \leq e(\mc{G}) \leq d_{min} 
\end{align}
with $d_{min}$ as the minimum node degree of $\mc{G}$. The above relation implies that, in general, a given network is more resilient to link removal than node removal.   
An equivalent definition of node/link-connectivity is given below. 
\begin{defn}
	\textit{$q$-connectivity:} A network/graph is $q$-node-connected (resp. $q$-link-connected) if it remains SC after removal of up to $q$ nodes (resp. $q$ links).
\end{defn}
See Fig.~\ref{fig_Qgraph} to better illustrate the above definition. 
\begin{figure}[]
	\centering
	\includegraphics[width=3.5in]{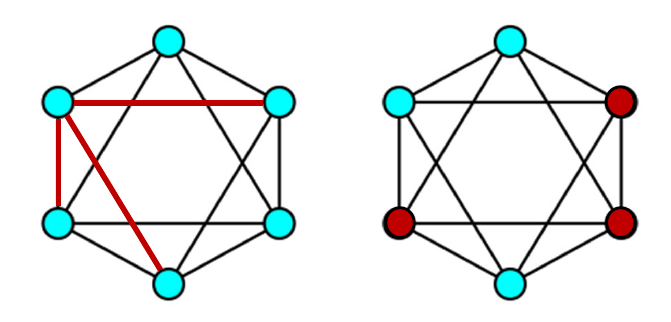}
	\caption{This figure shows an example $3$-node-connected and $3$-link-connected graph. in this example, by removing any set of (up to) $3$ links, the remaining graph is still strongly-connected; for example, by removing the red links the left graph holds its strong-connectivity. Similarly, by removing any set of (up to) $3$ nodes, the remaining reduced-order graph is still strongly-connected; for example, by removing  the red nodes (and their links) the right graph holds its strong-connectivity. Although this example shows an undirected network, the connectivity notions also hold for directed networks.}
	\label{fig_Qgraph}
\end{figure}
The algorithms to design $q$-node connected or $q$-link-connected networks can be found in \cite{galvez2021cycle,byrka2020breaching,cecchetto2021bridging,galvez2021approximation,gupta2011approximation}. This is also referred to as \textit{survivable network design} \cite{chen2022survivable,lau2007survivable}.

\subsubsection{Graph Observability via Structured Systems Theory}
Other relevant concepts in this work are structural observability of networks and observational equivalence. These concepts have been approached via a graph-theoretic (or structural) perspective as in \cite{liu-pnas,tnse18}.
\begin{defn} \label{defn_equiv}
	\textit{Observational Equivalence \cite{tnse18}:} Two sensor nodes $i,j$ are observationally equivalent if removing either
	of the two measurements $\mb{y}_i$ or $\mb{y}_j$ does not affect the $(A,C)$-observability, while removing
	both makes the pair $(A,C)$ unobservable.
\end{defn}

\begin{figure}[]
	\centering
	\includegraphics[width=3.5in]{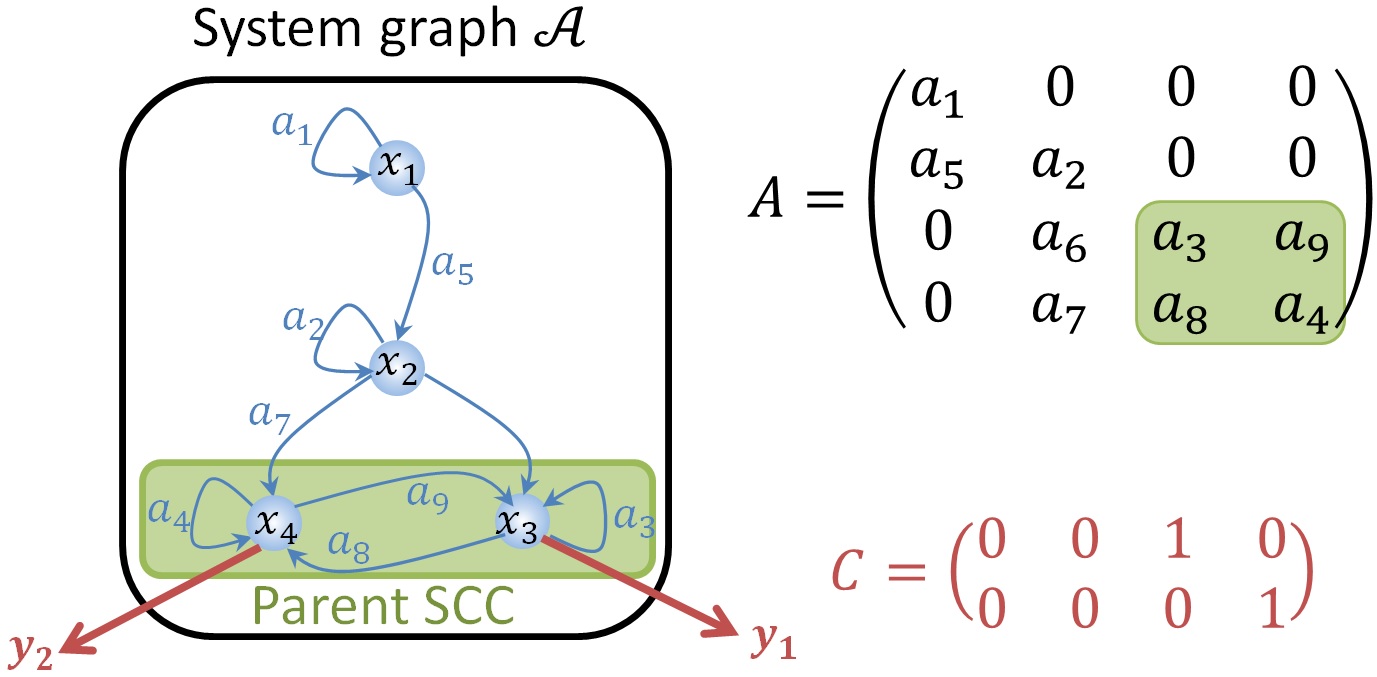}
	\caption{(Left) Example system graph $\mc{A}$ with one set of parent SCC and observationally equivalent outputs $y_1$ and $y_2$. (Right) The associated system matrix $A$ and output matrix $C$.}
	\label{fig_graph}
\end{figure}

Given a full-rank system matrix $A$ (e.g., via discretization of continuous-time systems), the sufficient condition for $(A,C)$-observability can be defined on the associated system digraph $\mc{A}$ and its SCCs. Given a system digraph $\mc{A}$, define its parent SCCs (or root SCC as referred to in \cite{liu-pnas}) as the SCCs with no outgoing links to other SCCs, see Fig.~\ref{fig_graph} for more illustrations. The SCC classification for a given system graph is determined by the depth-first search (DFS) algorithm\footnote{Note that DFS, particularly in its recursive form, might be susceptible to stack overflow issues and performance limitations due to the sheer size of the large-scale graphs.  An iterative implementation of DFS can be an alternative to mitigate stack overflow risks associated with recursion. Moreover, alternative techniques such as the improved version of \textit{Tarjan's} and \textit{Kosaraju's algorithm} \cite{pearce2016space} or other space-efficient algorithms \cite{hagerup2020space}  might be options to perform SCC classification.} \cite{algorithm}.       
\begin{lem} \label{lem_parent}
	Given a full-rank system matrix $A$, 
	\begin{enumerate}[label=(\roman*)]
		\item (at least) one output from every parent SCC in system digraph $\mc{A}$ is sufficient for $(A,C)$-observability,
		\item the state nodes in the same parent SCC are observationally equivalent.
	\end{enumerate}
\end{lem}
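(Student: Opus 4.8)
The plan is to reduce both claims to the single reachability condition that governs structural observability when $A$ is full-rank, and then to argue entirely on the condensation of $\mc{A}$ into SCCs. First I would invoke the graph-theoretic (Lin-type, dual to structural controllability) characterization of \cite{liu-pnas,icassp2016,acc13_kar}: $(A,C)$ is structurally observable if and only if (a) in the digraph $\mc{A}$ augmented with the output nodes every state node has a directed path to some output, and (b) a generic-rank / dilation-free condition holds on the associated bipartite structure. The key reduction is that, because $A$ has non-zero diagonal entries, every state node carries a self-loop, so the identity matching is already a perfect matching and no set $S$ of state nodes can have $|N^-(S)|<|S|$; hence (b) is automatically satisfied. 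Thus, under the standing full-rank hypothesis, structural observability collapses to the pure reachability condition (a), and for (generic, or the present $0$--$1$-structured) realizations consistent with this pattern $(A,C)$ is observable.

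For part (i), I would pass to the condensation DAG $\wh{\mc{A}}$ whose vertices are the SCCs of $\mc{A}$. Since $\wh{\mc{A}}$ is a finite acyclic digraph, following out-edges from any SCC must terminate, so every SCC has a directed path in $\wh{\mc{A}}$ to at least one sink, and a sink of $\wh{\mc{A}}$ is precisely a parent SCC (no outgoing links to other SCCs). Attaching one output to each parent SCC and lifting: for any state node $v$ in an SCC $S$, take a path $S\to\cdots\to P$ in $\wh{\mc{A}}$ with $P$ a parent SCC, lift it to a directed walk in $\mc{A}$ from $v$ into $P$, and append the edge into $P$'s output. Every state node then reaches an output, condition (a) holds, and by the first step the placement is sufficient for $(A,C)$-observability.

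For part (ii), let $i$ and $j$ be state nodes in the same parent SCC $P$, and consider an $(A,C)$ in which the measured states of $P$ are exactly $i$ and $j$ while every other parent SCC carries a measurement, so that $(A,C)$ is observable by part (i). The crucial observation is that, since $i$ and $j$ are mutually reachable inside $P$ (indeed reaching any node of $P$ is equivalent to reaching all of $P$), a state node $v$ reaches $i$ iff it reaches $j$; an output sensing $i$ and one sensing $j$ therefore induce the same set of output-reachable states. Hence deleting $\mb{y}_i$ still leaves every node that reached $i$ with a path to $\mb{y}_j$ (via $v\to i\to\cdots\to j\to\mb{y}_j$), so condition (a) and thus observability persist, and symmetrically for deleting $\mb{y}_j$; but deleting both strips $P$ of any output, and as $P$ is a parent SCC its states have no directed walk leaving $P$, so (a) fails and $(A,C)$ becomes unobservable. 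By Definition~\ref{defn_equiv}, $i$ and $j$ are observationally equivalent.

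The step I expect to be the main obstacle is the first one: making rigorous that the dilation / generic-rank part of the structural-observability characterization is vacuous once every state node has a self-loop, and that structural observability transfers to $(A,C)$-observability for the relevant choices of $C$; I would handle this by citing the structured-systems results above and writing the self-loop / perfect-matching argument out in full. A secondary point that needs care is the bookkeeping in (ii): ``an output of $P$'' must mean a row of $C$ whose support meets $P$, and one must genuinely be in the regime where $i$ and $j$ are the only measured states of $P$, so that removing both really leaves $P$ unmeasured.
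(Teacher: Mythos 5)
Your proof is correct, and its core reduction is the same one the paper uses: because $A$ is (structurally) full rank with non-zero diagonals, the matching/dilation part of the structural-observability test is vacuous and everything collapses to output-connectivity, which is exactly the content of the cited \cite[Theorem~1]{icassp2016} that the paper's proof of (ii) invokes. Where you differ is in completeness rather than strategy: the paper disposes of part (i) by citation to \cite{acc13_kar,liu-pnas,icassp2016}, whereas you re-derive it via the condensation DAG (every SCC reaches a sink, sinks are parent SCCs, lift the path and append the output edge); and for part (ii) the paper only argues the ``removing either output preserves output-connectivity'' half, while you also verify the second half of Definition~\ref{defn_equiv} --- that deleting both outputs leaves the parent SCC with no path to any output, and since output-connectivity is necessary for observability of any realization, the pair becomes unobservable. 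Your closing caveats are also well placed: the genericity gap between structural and numerical observability is glossed over by the paper as well, and your bookkeeping requirement that $i,j$ be the only measured states of $P$ is exactly what makes the ``removing both'' clause meaningful, a point the paper leaves implicit. So your argument is, if anything, a more self-contained version of the paper's; nothing in it would fail.
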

\begin{proof}
	The proof of (i) is given in \cite{acc13_kar,liu-pnas,icassp2016}. For the proof of (ii), recall from \cite[Theorem~1]{icassp2016} that the sufficient condition for observability of a full-rank system is to have a directed path from every node to (at least) one output-connected node. This follows from the output connectivity condition for structural observability and implies how observability is related to the existence of paths in the graph structure. From the definition, all nodes in the same parent SCC are mutually connected via a path. In other words, for any two state nodes $i,j$ in a parent SCC there exists a directed path from $i$ to $j$ and vice versa. This implies that the output from \textit{any} state node in the parent SCC results in output-connectivity of all other nodes in that SCC. This proves the observational-equivalence of all nodes in the same parent SCC.  
\end{proof}

\section{Resilient Estimator Design} \label{sec_res}
\subsection{The Proposed Estimation Technique}
The suggested distributed estimation scheme in this work is based on consensus/averaging on the information received from the neighbouring sensors. However, in contrast to \cite{acc13}, there is no observation-fusion step in our proposed scheme; thus, more relaxed network connectivity is needed in our new framework. This paves the way for the particular resilient design in terms of network topology.   
We propose the following distributed (or networked) estimator at every sensor $i$:

\small \begin{align}\label{eq_p} 
	\widehat{\mb{x}}_i(k|k-1) =& \sum_{j\in\mathcal{N}(i)} W_{ij}A\widehat{\mb{x}}_j(k-1|k-1),
	\\ 
	\widehat{\mb{x}}_i(k|k) =& \widehat{\mb{x}}_i(k|k-1) + K_i C_i^\top \left({y}_i(k)-C_i \widehat{\mb{x}}_i(k|k-1)\right), \label{eq_m}
\end{align} \normalsize
with $W_{ij}$ as the weight assigned to the estimation data received from sensor $j$. $W$ matrix is the consensus matrix and is row-stochastic\footnote{For a row-stochastic matrix $W$ we have $\sum_{j=1}^n W_{ij} = 1$. The simplest way to design this matrix is to set all the elements in each row of $W$ as $W_{ij} = \frac{1}{|\mathcal{N}(i)|}$ for $j\in\mathcal{N}(i)$. Another option is to design based on \textit{Metropolis-Hastings} method \cite{xiao2007distributed} as:
\begin{align} \nonumber
	W_{ij} = \left\{\begin{array}{ll}
	 \frac{1}{\max \{\mc{N}(i),\mc{N}(j)\}+1}, & (i,j) \in \mc{E}\\
		1-\sum_{j\in \mc{N}(i)} \frac{1}{\max \{\mc{N}(i),\mc{N}(j)\}+1}, & i=j\\
		0, & \mbox{otherwise}
	\end{array}\right.
\end{align}	
Other methods to design row-stochastic matrices are discussed in Section~\ref{sec_disc}.} with $\mc{W}$ as its structure. The a-priori and a-posteriori estimates at sensor $i$ are denoted by $\widehat{\mb{x}}_i(k|k-1)$ and $\widehat{\mb{x}}_i(k|k)$, $K_i$ is the local gain matrix, and $\mc{N}(i)$  denotes the set of neighbor sensors of $i$ over network $\mc{G}$. In the proposed distributed estimator, every sensor $i$ first performs information fusion on the \textit{estimates} received from its neighbouring sensors and, then, performs \textit{innovation-update} of this prediction by incorporating its own local measurement. This distributed setup mimics the structure of the Kalman filter; however, the main difference is the local consensus/averaging on the a-priori estimates in Eq.~\eqref{eq_p} which makes our estimator a distributed algorithm. Note that Kalman filter is a \textit{centralized} procedure, in which the central computing node receives all the information, and centrally estimates the system states. However, in our distributed setup, each sensor processes its local measurements and observation of the system and communicates with other sensors to refine its estimate. Thus, every sensor has its own estimate, which makes the distributed estimator more resilient to sensor failure. This is an advantage over the centralized Kalman filter which is prone to single-node-of-failuer. Furthermore, in the distributed setup, each sensor has a limited view of the overall state, i.e., the system is not fully observable to any sensor. The observability of the entire system relies on the collective observations of all sensors. In this direction, sensors share state estimates to enhance their system observability. This is in contrast to the Kalman filter, which needs global observability at the central computing node.
Moreover, as compared to the other distributed setups, it is clear that only one step of data-fusion is performed in Eq. \eqref{eq_p} between two consecutive time-steps $k$ and $k-1$. This clarifies that the estimator is in single time-scale, in contrast to many steps of data-fusion (referred to as the \textit{consensus loop}) in \cite{he2020secure,battilotti2021stability,olfati2007distributed}.  
Further, note that in the innovation-update step~\eqref{eq_m}, there is no information-fusion on measurement data and, thus, there is no need to share the measurements. As claimed in \cite{acc13}, a \textit{hub-based} network with more connectivity is needed for this step. Simplifying this step, thus, is an improvement over the proposed protocol in \cite{acc13} in terms of relaxing the required sensor-network connectivity. As we will see later in this paper, this is a key aspect towards survivable network design.

Next, we investigate the stability of the error dynamics for this estimator. Define the local error as $\mb{e}_i(k) := \mb{x}(k)-\widehat{\mb{x}}_i(k|k)$ and the global error as the column concatenation of all local errors as $\mb{e}(k) := (\mb{e}_1(k),\cdots,\mb{e}_m(k))^\top $. The error dynamics at every sensor $i$ of the proposed estimator takes the following form:
\begin{align}
	\nonumber
	\mb{e}_i(k) =&\mb{x}(k) - \widehat{\mb{x}}_i(k|k-1) -K_i C_i^\top \left({y}_i(k)-C_i \widehat{\mb{x}}_i(k|k-1)\right)
	\\ 
	=&\mb{x}(k) - \sum_{j\in\mathcal{N}(i)} W_{ij}A\widehat{\mb{x}}_i(k-1|k-1) \nonumber \\
	&-K_i C_i^\top \left({y}_i(k)-C_i \widehat{\mb{x}}_i(k|k-1)\right) \nonumber
\end{align}
Substituting Eq.~\eqref{eq_syst}-\eqref{eq_y} we have,

\small \begin{align}
	\mb{e}_i(k) &= A\mb{x}(k-1) + \nu - \sum_{j\in\mathcal{N}(i)} W_{ij}A\widehat{\mb{x}}_i(k-1|k-1) \nonumber \\
	&- K_i C_i^\top \left( C_i \mb{x}(k)+ \zeta_i-C_i \sum_{j\in\mathcal{N}(i)} W_{ij}A\widehat{\mb{x}}_i(k-1|k-1)\right) \nonumber 
	\\ \nonumber
	&= A\mb{x}(k-1) + \nu - \sum_{j\in\mathcal{N}(i)} W_{ij}A\widehat{\mb{x}}_i(k-1|k-1) \nonumber \\ 
	&-K_i \Bigl(C_i^\top C_i(A\mb{x}(k-1) + \nu) + \zeta_i \nonumber \\
	&-C_i^\top C_i\sum_{j\in\mathcal{N}(i)} W_{ij}A\widehat{\mb{x}}_i(k-1|k-1)\Bigr). \label{eq_err_semi}	 
\end{align}\normalsize 
Recalling that the $W$ matrix is stochastic, we have ${A\mb{x}(k-1)=\sum_{j\in \mathcal{N}(i)} W_{ij}A\mb{x}(k-1)}$. By substituting this in the above error equation \eqref{eq_err_semi} and substituting  $\mb{e}_j(k-1) := \mb{x}(k-1)-\widehat{\mb{x}}_j(k-1|k-1)$, we have
\begin{align} \nonumber
	\mb{e}_i(k) &=\sum_{j\in \mathcal{N}(i)} W_{ij}A\mb{e}_j(k-1) \\ &-K_i  C_i^\top C_i\sum_{j\in \mathcal{N}(i)}  W_{ij}A\mb{e}_j(k-1)+ \eta_i(k), \label{eq_err_i}
\end{align}
with ${\eta_i(k) := \nu(k-1)-K_i (C_i \zeta_i(k) +
	C_i^\top C_i \nu(k-1))}$ collecting the noise terms.
The global error dynamics is,
\begin{align} \nonumber
	\mb{e}(k) &= (W\otimes A - KD_C(W\otimes A))\mb{e}(k-1) +
	{\eta}(k)  \\
	&= \widehat{A}\mb{e}(k-1) +
	\eta(k), \label{eq_err1} 
\end{align}
with $\eta(k)$ as the column concatenation of all the noise terms $\eta_i(k)$, ``$\otimes$'' as the Kronecker product,
${\widehat{A} := W\otimes A - KD_C(W\otimes A)}$,
${K := \mbox{blockdiag}[K_i]}$,   ${D_C := \mbox{blockdiag}[C_i^\top C_i]}$. 

Given the strong-connectivity of the sensor network $\mc{G}$, we prove that the distributed estimator \eqref{eq_p}-\eqref{eq_m} is distributed observable and can be made Schur stable by proper design of the gain matrix $K$. 
\begin{lem} \label{lem_sc}
	The error dynamics \eqref{eq_err1} is Schur stabilizable if the matrix $W$ is irreducible, i.e., the sensor network $\mc{G}$ is SC.
\end{lem}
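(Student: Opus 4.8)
The plan is to regard the global error dynamics \eqref{eq_err1} as the error equation of a distributed Luenberger-type observer and to reduce Schur stabilizability to a \emph{distributed observability} property of the composite pair. Writing $\widehat{A}=(I_{mn}-KD_C)(W\otimes A)=(W\otimes A)-K\big(D_C(W\otimes A)\big)$, the requirement ``choose a block-diagonal $K=\mbox{blockdiag}[K_i]$ with $\rho(\widehat{A})<1$'' is exactly the design of a structured (block-diagonal) stabilizing output-injection gain for the pair $\big(W\otimes A,\ D_C(W\otimes A)\big)$. By the standard detectability argument it suffices that every eigenvalue of $W\otimes A$ of modulus at least $1$ be observable, and since for any nonzero mode observability with respect to $D_C(W\otimes A)$ coincides with observability with respect to $D_C$, I would reduce further to checking the pair $(W\otimes A,\,D_C)$ at its unstable modes, subject to $K$ being block-diagonal.

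Next I would bring in the three hypotheses — $(A,C)$ observable, $A$ full rank, and $\mc{G}$ SC (so $W$ irreducible) — through a PBH / structured-systems argument in the spirit of \cite{liu-pnas,icassp2016,tnse18,acc13}. The eigenvalues of $W\otimes A$ are products $\mu\lambda$ of eigenvalues of $W$ and of $A$, with right eigenvectors of Kronecker form $w\otimes v$ (where $Ww=\mu w$, $Av=\lambda v$), and such a vector lies in $\ker D_C$ exactly when $C_i v=0$ for every sensor $i$ in the support of $w$. For the Perron value $\mu=1$, irreducibility and row-stochasticity force $w=\mathbf{1}$, which has full support, so $w\otimes v\in\ker D_C$ would give $Cv=0$, contradicting observability of $(A,C)$; hence every eigenvalue of $A$ — the only candidates for instability arising from the $\mu=1$ slice — is observed. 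For the remaining eigenvalues of the irreducible $W$, the corresponding eigenvectors have full support across $\mc{V}$ for generic positive weights consistent with the SC structure of $\mc{G}$, and the same argument applies, while full rank of $A$ rules out a spurious unobservable mode at the origin. This establishes distributed observability of the composite pair.

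Finally, from the fact that no mode of $W\otimes A$ with modulus at least $1$ is hidden from $D_C$, I would conclude the existence of a block-diagonal $K$ with $\rho(\widehat{A})<1$ — for instance by synthesizing each local gain $K_i$ from the Riccati/LMI construction of \cite{acc13} or by structured pole placement — which is precisely the asserted Schur stabilizability. The main obstacle is the second step: honoring the block-diagonal structure of $K$ while controlling the eigenstructure of $W\otimes A$. The sharp case is a non-Perron eigenvalue of $W$ whose eigenvector is supported on a proper subset $S\subsetneq\mc{V}$ of sensors that are collectively blind to the matching eigendirection of $A$; showing that this configuration is non-generic — hence avoidable by ``proper design'' of the network weights, which is why the statement is an existence claim rather than one valid for every irreducible $W$ — is where irreducibility of $W$ and full rank of $A$ are genuinely needed.
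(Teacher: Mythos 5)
Your first reduction matches the paper's: both you and the authors reduce Schur stabilizability of \eqref{eq_err1} to observability (in fact detectability) of the pair $(W\otimes A, D_C)$, noting that for nonzero modes the output map $D_C(W\otimes A)$ and $D_C$ see the same eigenvectors, and both defer the block-diagonal restriction on $K$ to the subsequent LMI design, so that part is fine. Where the two arguments diverge is in how that observability is actually established. The paper does not run a PBH/eigenvector computation at all: it invokes a structural (generic) observability result for the \emph{Kronecker network product} of the sensor graph $\mc{G}$ with the system digraph $\mc{A}$ (\cite[Theorem~4]{tsipn}), which states that when $A$ is structurally full rank (so $\mc{A}$ is covered by cycles) and $(A,C)$ is observable, strong connectivity of $\mc{G}$ suffices for observability of $(W\otimes A, D_C)$. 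That citation is precisely the content of the lemma.

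Your PBH route, by contrast, only closes the Perron slice. For $\mu=1$ the argument is sound: irreducibility and row-stochasticity give $w=\mathbf{1}$, full support, and $(A,C)$-observability kills the candidate unobservable mode. But unstable modes $\mu\lambda$ with $|\mu\lambda|\ge 1$ can arise from non-Perron eigenvalues of $W$ (e.g.\ $|\mu|=1$ for periodic-like chains, or $|\mu|<1$ with $|\lambda|$ large), and for an irreducible stochastic $W$ only the Perron eigenvector is guaranteed positive --- non-Perron eigenvectors can have zero entries, so the ``full support for generic weights'' claim is exactly the statement you would need to prove, and you explicitly leave it as ``the main obstacle.'' In addition, when distinct pairs $(\mu,\lambda)$ and $(\mu',\lambda')$ give the same product, eigenvectors of $W\otimes A$ need not be of Kronecker form $w\otimes v$, so restricting the PBH test to such vectors is itself a gap. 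These are precisely the degeneracies that the structural/generic framework of \cite{tsipn} (and \cite{liu-pnas,icassp2016}) is designed to handle via output-connectivity and cycle-family conditions on the product graph; without either proving your genericity claim or importing such a result, the proposal does not yet establish the lemma.
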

\begin{proof}
	From the Kalman theorem \cite{bay} the linear system \eqref{eq_err1} is Schur stabilizable if the pair $(W\otimes A, D_C)$ is observable. This relates the stability of error dynamics \eqref{eq_err1} to the observability of distributed system matrix $W\otimes A$ and matrix of shared measurements $D_C$. Recall that, using graph theory, the (structural) observability of the pair $(W\otimes A, D_C)$ can be discussed as the \textit{Kronecker network product} of the sensor network $\mc{G}$ and the system graph $\mc{A}$. Given that the system $A$ is full-rank, its system digraph $\mc{A}$ is cyclic. Then, using \cite[Theorem~4]{tsipn}, the minimal sufficient condition for observability of this graph product is that the network $\mc{G}$ be SC. In other words, the sensor network $\mc{G}$ being SC guarantees the observability of the Kronecker product network and its associated system-measurement pair, i.e., strong connectivity of $\mc{G}$ (i.e., irreducibility of $W$) implies $(W\otimes A, D_C)$-observability. This completes the proof. 
\end{proof}

Given the necessary condition for distributed observability, the next step is to design the feedback gain matrix $K$. Recall that for distributed and localized filtering, the feedback gain $K$ needs to be \textit{block-diagonal}. This ensures that every sensor node uses its own information feedback for estimation. Such a block-diagonal gain matrix can be designed by using linear-matrix-inequality (LMI) based on cone-complementary algorithms proposed in \cite{el1997cone,usman_cdc:11,mangasarian1995extended}. Given that the distributed estimator \eqref{eq_p}-\eqref{eq_m} is generically observable (as proved in Lemma~\ref{lem_sc}), a structured block-diagonal feedback gain matrix, $K$, needs to be designed for the distributed observer. We need to find $Q \succ 0$ (`$\succ$' denotes positive-definiteness) for $\widehat{A} := W \otimes A -K D_C (W\otimes A)$ such that,
\begin{align}
	Q-\widehat{A}Q\widehat{A} \succ 0
\end{align}	
which is equivalent to the following,
\begin{equation} \label{eq_lmi0}
	\begin{aligned}
		& ~~ \left( \begin{array}{cc} Q&\widehat{A}^\top Q\\ Q\widehat{A}&Q\\ \end{array} \right) \succ 0
	\end{aligned}
\end{equation}
The above LMI cannot be directly solved for the block-diagonal gain matrix. An equivalent solution is given in the literature to solve this which we recall here. As stated in \cite{pajic2010wireless}, to have $\rho(\widehat{A})<1$, there must exist $Q,R \succ 0$ such that
\begin{equation} \label{eq_lmi1}
\left( \begin{array}{cc} Q&\widehat{A}^\top\\ \widehat{A}&R\\ \end{array} \right) \succ 0
\end{equation}
with~$Q=R^{-1}$. This constraint is non-convex. Therefore, using the relaxation strategy in \cite{el1997cone}, we replace $Q=R^{-1}$ with its linear equivalent as follows. The matrices,~$Q,R \succ 0$, satisfy~$Q=R^{-1}$ as the optimizer of the following LMI:
\begin{equation}
	\begin{aligned}
		& ~\displaystyle\min ~ \mathbf{trace}(QR)~\mbox{subject~to} ~\left( \begin{array}{cc}  Q&I\\ I&R\\ \end{array} \right) \succeq 0,
	\end{aligned}
\end{equation} 
The above can be summarized as the following LMI:
\begin{equation} \label{eq_LMI2}
	\begin{aligned}
		\displaystyle
		\min
		~~ &  \mathbf{trace}(QR) \\
		\text{s.t.}  ~~& Q,R\succ 0,\\ ~~ & \left( \begin{array}{cc} Q&\widehat{A}^\top\\ \widehat{A}&R\\ \end{array} \right) \succ 0,~ \left( \begin{array}{cc} Q&I\\ I&R\\ \end{array} \right) \succ 0,\\
		~~ & K\mbox{~is~block-diagonal}.\\
	\end{aligned}
\end{equation} 
where $\mathbf{trace}(QR)$ can be replaced with the linear approximation~$\Phi _{lin}(Q,R)=\mathbf{trace}(Q_0R+R_0Q)/2$ \cite{el1997cone}. The following iterative algorithm is used to minimize this trace under the constraints.
\begin{algorithm} \label{alg_1}
		\textbf{Initialization:} feasible points~$Q_0,R_0,K$\; 
		\While{termination criteria NOT hold}{
			Minimize~$\mathbf{trace}(Q_tR + R_tQ)$ under the constraints in Eq.~\eqref{eq_LMI2} and find new~$Q,R,K$\;
			If~$\rho (\widehat{A}) <1$ terminate, otherwise set~$R_{t+1}=R,~Q_{t+1}=Q$ and run the above step for next iteration~$t=t+1$\;
			Set $t \leftarrow t+1$ \;
		}
		\textbf{Return: $Q,R,K$}  \;	
		\caption{Iterative calculation of block-diagonal gain $K$}
\end{algorithm}
\\ 
As shown in \cite{el1997cone}, $\mathbf{trace}(Q_tR + R_tQ)$ in Algorithm~\ref{alg_1} is a non-increasing sequence that converges to~$2mn$. In this regard, a stopping criteria in Algorithm~\ref{alg_1} is established as reaching within~$2mn + \epsilon$ of the
trace objective. This is set as the termination criteria of the loop. We use the CVX toolbox in MATLAB to solve this iterative LMI algorithm. 
It is known that such cone-complementary algorithms are of polynomial-order complexity and, thus, are adaptable for large-scale applications.

Note that Lemma~\ref{lem_sc} relaxes the connectivity requirement 
in previous works \cite{acc13,kar2013consensus,das2015distributed} to strong-connectivity with no assumption on the local observability of any sensor. Recall that \cite{acc13,kar2013consensus,das2015distributed} mandate more than strong-connectivity for convergence and stability. The work \cite{acc13} requires a hub-based network and measurement sharing in \eqref{eq_m} for stability. The works \cite{kar2013consensus,das2015distributed} require observability in the neighbourhood of every sensor over the sensor network $\mc{G}$. On the other hand, the works \cite{he2020secure,battilotti2021stability,olfati2007distributed} perform many consensus steps (more than the diameter of $\mc{G}$) as the inner loop to relax the observability condition. These works mandate very fast communication devices to gain observability (via excessive data-sharing) between every two steps of system dynamics. This is in contrast to the one-step consensus in estimator \eqref{eq_p}-\eqref{eq_m}. In this sense, our work relaxes the network connectivity and communication requirement in many existing distributed estimation literature. 
Next, for \textit{resilient} (or\textit{ $q$-redundant}) estimator design, the following assumption is made. 

\begin{ass} \label{ass_ql}
	The sensor network $\mc{G}$ is $q$-link-connected.
\end{ass}
To satisfy assumption~\ref{ass_ql} algorithms in \cite{galvez2021cycle,byrka2020breaching,cecchetto2021bridging,galvez2021approximation,gupta2011approximation} can be used. Given any network, these algorithms recover its $q$-connectivity and provide a method to improve the connectivity of the resulting network and make it $q$-link-connected for any desired $q$.    

\begin{theorem}
	The distributed estimator \eqref{eq_p}-\eqref{eq_m} under Assumption~\ref{ass_ql} is resilient to the removal of up to $q$ communication links.
\end{theorem}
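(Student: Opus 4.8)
The plan is to reduce the claim to a repeated application of Lemma~\ref{lem_sc}. First I would fix an arbitrary set $\mc{E}_r \subset \mc{E}$ of at most $q$ communication links and let $\mc{G}' = \{\mc{V}, \mc{E} \setminus \mc{E}_r\}$ be the sensor network obtained after their removal. By Assumption~\ref{ass_ql} and the definition of $q$-link-connectivity, $\mc{G}'$ is still strongly connected. Observe that the node set is unchanged, so every surviving sensor still holds its own measurement $\mb{y}_i$ and output matrix $C_i$; hence the block-diagonal matrix $D_C = \mbox{blockdiag}[C_i^\top C_i]$, which encodes the collective observation of the group, is unaffected by link deletion, and so is the full-rank (hence cyclic) structure of the system digraph $\mc{A}$.

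Next I would rebuild the consensus weights over $\mc{G}'$. Since $\mc{G}'$ is SC, one can choose a row-stochastic matrix $W'$ whose zero/non-zero pattern $\mc{W}'$ matches $\mc{G}'$ (e.g.\ via the Metropolis--Hastings construction of the footnote applied to $\mc{G}'$), and such a $W'$ is irreducible precisely because $\mc{G}'$ is SC. Running the estimator \eqref{eq_p}-\eqref{eq_m} over $\mc{G}'$ with weights $W'$ then produces the global error recursion \eqref{eq_err1} with $\widehat{A}$ replaced by $\widehat{A}' := W' \otimes A - K' D_C (W' \otimes A)$ for a block-diagonal gain $K'$ adapted to the surviving topology.

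Then I would invoke Lemma~\ref{lem_sc}: irreducibility of $W'$ together with $A$ full-rank yields observability of the pair $(W' \otimes A, D_C)$, hence the error dynamics is Schur stabilizable. A block-diagonal $K'$ achieving $\rho(\widehat{A}') < 1$ is then obtained from the cone-complementary LMI \eqref{eq_LMI2} via Algorithm~\ref{alg_1}, recomputed for $\mc{G}'$. With $\widehat{A}'$ Schur, the homogeneous part of \eqref{eq_err1} decays geometrically and the estimation error remains bounded under the bounded noise term $\eta$, which is exactly resilience to the removal of those links. Since $\mc{E}_r$ was an arbitrary set of at most $q$ links, the statement follows.

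The step I expect to require the most care is making explicit that link removal perturbs only the consensus matrix $W$ (and neither $D_C$ nor the full-rank/cyclic structure of $A$), so that the lone hypothesis ``$\mc{G}'$ is SC'' is genuinely everything Lemma~\ref{lem_sc} needs; the remainder is bookkeeping. I would also remark that the gain $K'$ is topology-dependent and must therefore be redesigned --- either offline for each anticipated failure pattern or online once the failed links are identified --- which is the practical price of an otherwise immediate corollary of Lemma~\ref{lem_sc}.
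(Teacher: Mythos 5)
Your proposal is correct and follows essentially the same route as the paper: Assumption~\ref{ass_ql} guarantees the surviving network stays strongly connected after removing any $q$ links, and Lemma~\ref{lem_sc} then gives $(W\otimes A, D_C)$-observability and Schur stabilizability. Your additional remarks --- that link removal only perturbs $W$ (not $D_C$ or the cyclic structure of $A$) and that the gain must be redesigned for the surviving topology --- are details the paper leaves implicit, but the core argument is the same.
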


\begin{proof}
	First, recall from Lemma~\ref{lem_sc} that the error dynamics of Eq. \eqref{eq_p}-\eqref{eq_m} is Schur stabilizable if the sensor network $\mc{G}$ is SC. From Assumption~\ref{ass_ql} and the definition of $q$-link connectivity, the network $\mc{G}$ remains SC after deletion of up to $q$ number of links. Therefore, $(W\otimes A, D_C)$-observability is preserved after link removal and the design is resilient. This completes the proof.  
\end{proof}	
Recall that for an observable pair $(W\otimes A, D_C)$, using the linear-matrix-inequality (LMI) design in \cite{usman_cdc:11}, one can find block-diagonal gain matrix $K$ to stabilize \eqref{eq_err1}.

Next, we discuss $q$-node connectivity and node-removal resiliency. Note that, in case of sensor node removal, even the centralized $(A,C)$-observability might be lost after node deletion. This is where the concept of observational equivalence plays a key role. To preserve observability up to $q+1$ observationally equivalent sensor nodes are needed from every set of parent SCC in $\mc{A}$. Therefore, for the estimator design the underlying sensor network includes (at least) $q+1$ number of sensors with measurements from every parent SCC in $\mc{A}$. Then, one can design a $q$-node-connected network to gain network resiliency and preserve strong connectivity against node deletion. These are summarized in the next assumption. 

\begin{ass} \label{ass_qn}
	The following assumptions on the sensor network $\mc{G}$ holds:
	\begin{enumerate}[label=(\roman*)]
		\item It includes $q+1$ observationally equivalent sensor measurements from every parent SCC in $\mc{A}$.
		\item It is $q$-node-connected.
	\end{enumerate}
\end{ass}
To satisfy Assumption~\ref{ass_qn}-(i) first using the DFS algorithm \cite{algorithm} the parent SCCs are defined, and then  $q+1$ sensor measurements from every parent SCC are set. Then, to satisfy Assumption~\ref{ass_qn}-(ii), algorithms in \cite{galvez2021cycle,byrka2020breaching,cecchetto2021bridging,galvez2021approximation,gupta2011approximation} are used to design and recover $q$-node-connectivity of the sensor network for any desired $q$. This is also discussed in Algorithm~\ref{alg_ac}. Note that this algorithm is of polynomial-order complexity, which makes it applicable in large-scale applications.

\begin{algorithm} 
	\KwData{ system matrix $A$, system digraph $\mc{A}$}
	\KwResult{ sensor network $\mc{G}$, estimate $\widehat{\mb{x}}_i(k|k)$ }
	{\textbf{Initialization:} $k=1$ and random state initialization\;
		Define output matrix $C$ with $q+1$ outputs from every parent SCC in $\mc{A}$\;
		Assign every output to one sensor $i$\;
		Design a $q$-node-connected sensor network $\mc{G}$ via techniques in \cite{galvez2021cycle,byrka2020breaching,cecchetto2021bridging,galvez2021approximation,gupta2011approximation}\;
		Define row-stochastic consensus matrix $W$\;
		Find $K$ via the iterative LMI design in Algorithm~\ref{alg_1}\;
	}
	\While{termination criteria NOT hold\;
	}{Sensor $i$ receives $\widehat{\mb{x}}_j(k-1|k-1)$ from neighbor sensors $j\in \mc{N}(i)$\;
		Sensor $i$ updates $\widehat{\mb{x}}_i(k|k)$ via dynamics~\eqref{eq_p}-\eqref{eq_m}\;
		$k\leftarrow k+1$\;
	}
	\caption{\textsf{The Resilient Estimator Design}} \label{alg_ac} 
\end{algorithm}

It should be emphasized that the distributed estimation steps in Algorithm~\ref{alg_ac} are innovative, and only the $q$-node-connected redundant design is via the results in \cite{galvez2021cycle,byrka2020breaching,cecchetto2021bridging,galvez2021approximation,gupta2011approximation}. In particular, the iterative distributed estimation updates via dynamics \eqref{eq_p}-\eqref{eq_m} advance the algorithm beyond existing methods. No existing work in the literature considers our proposed single time-scale estimation which relaxes the local observability assumption while removing the consensus loop on sensors. This significantly reduces the communication and computation load on sensors and makes the algorithm more practical in large-scale real-world setups. In particular,  dynamics \eqref{eq_p}-\eqref{eq_m} only requires strong-connectivity of the network, which reduces information-sharing in the neighborhood of sensors and further allows for $q$-redundant resilient design via graph-theoretic approaches. Many existing works require more than strong-connectivity and do not allow for redundant design via the techniques in \cite{galvez2021cycle,byrka2020breaching,cecchetto2021bridging,galvez2021approximation,gupta2011approximation}.

\begin{theorem}
	Under Assumption~\ref{ass_qn} and Algorithm~\ref{alg_ac} the distributed estimation \eqref{eq_p}-\eqref{eq_m} is resilient to removal of up to $q$ sensor nodes.
\end{theorem}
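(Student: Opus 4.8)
The plan is to combine the two structural guarantees of Assumption~\ref{ass_qn} exactly as they were set up: part (i) protects \emph{centralized} observability of the pair $(A,C)$, while part (ii) protects \emph{strong connectivity} of the sensor network, and Lemma~\ref{lem_sc} then glues these two facts into Schur stabilizability of the reduced error dynamics. So the proof is essentially a bookkeeping argument: show that after deleting any set of at most $q$ sensor nodes, both ingredients of Lemma~\ref{lem_sc} still hold for the surviving subnetwork, and conclude that one can re-design (or keep) a block-diagonal gain $K$ stabilizing the reduced-order error dynamics of the form \eqref{eq_err1}.

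The key steps, in order, are as follows. First, fix an arbitrary set $\mc{R}\subset\mc{V}$ with $|\mc{R}|\le q$ of removed sensor nodes and let $\mc{G}'=\mc{G}\setminus\mc{R}$ be the surviving network with the restricted row-stochastic consensus matrix $W'$ and the restricted block-diagonal $D_C'$. Second, invoke Assumption~\ref{ass_qn}-(ii): since $\mc{G}$ is $q$-node-connected, $\mc{G}'$ remains strongly connected, hence $W'$ is irreducible. Third, invoke Assumption~\ref{ass_qn}-(i): each parent SCC of $\mc{A}$ contributed $q+1$ observationally equivalent sensors, so after deleting at most $q$ of them at least one measurement from every parent SCC survives; by Lemma~\ref{lem_parent}-(i) this means the reduced global pair $(A,C')$ is still observable, i.e.\ $D_C'$ still encodes a set of shared measurements that covers every parent SCC of $\mc{A}$. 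Fourth, apply Lemma~\ref{lem_sc} to the surviving system: strong connectivity of $\mc{G}'$ plus the full-rank (cyclic) system digraph $\mc{A}$ with at least one output per parent SCC gives observability of the Kronecker-product pair $(W'\otimes A,\,D_C')$. Fifth, with that observability in hand, the LMI / cone-complementary design of Algorithm~\ref{alg_1} applied to $\widehat{A}':=W'\otimes A-K'D_C'(W'\otimes A)$ yields a block-diagonal $K'$ with $\rho(\widehat{A}')<1$, so the reduced error dynamics is Schur stable; since $\mc{R}$ was arbitrary of size $\le q$, the estimator is resilient to the removal of up to $q$ sensor nodes.

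The step I expect to be the main obstacle — and the one worth spelling out carefully rather than waving at — is the third: verifying that ``at least one surviving measurement per parent SCC'' genuinely follows from Assumption~\ref{ass_qn}-(i) in the worst case, and that this is exactly what observational equivalence buys us. The subtlety is that Lemma~\ref{lem_parent}-(ii) only asserts pairwise observational equivalence of state nodes inside a single parent SCC, whereas we are using the $q+1$-fold redundancy of \emph{sensor} outputs attached to that SCC; one must note that any single surviving output from the SCC restores output-connectivity of that whole SCC (and only that SCC's output matters for observability by Lemma~\ref{lem_parent}-(i)), so the pigeonhole count $(q+1)-q\ge 1$ does the job independently in each parent SCC. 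A secondary, more routine caveat is that the feedback gain must in general be recomputed for the reduced network $\mc{G}'$ (the original $K$ need not be stabilizing for $\widehat{A}'$); since Algorithm~\ref{alg_ac} re-runs the LMI design of Algorithm~\ref{alg_1} on the operative network and that algorithm has polynomial complexity, this is a feature of the resilient protocol rather than a gap, and the proof only needs the \emph{existence} of such a $K'$, which is guaranteed by Lemma~\ref{lem_sc} via the restored observability.
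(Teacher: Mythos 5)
Your proposal is correct and follows essentially the same route as the paper's own twofold argument: pigeonhole on the $q+1$ observationally equivalent outputs per parent SCC to preserve centralized $(A,C)$-observability (Lemma~\ref{lem_parent}), and $q$-node-connectivity to preserve strong connectivity and hence distributed $(W\otimes A, D_C)$-observability via Lemma~\ref{lem_sc}, yielding Schur stability. Your additional remarks on restricting $W$ and $D_C$ to the surviving subnetwork and on recomputing the block-diagonal gain via Algorithm~\ref{alg_1} make explicit details the paper leaves implicit, but they do not change the argument.
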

\begin{proof}
	The proof is twofold. First, we show that, under removal of $q$ sensor nodes, centralized $(A,C)$-observability holds and then we prove that distributed $(W \otimes A, D_C)$-observability holds. Following from Assumption~\ref{ass_qn}-(i) and Algorithm~\ref{alg_ac} the sensor network includes $q+1$ outputs from every parent SCC. Therefore, after the removal of any set of $q$ sensors, (at least) one output from every parent SCC remains and guarantees $(A,C)$-observability. This follows from Lemma~\ref{lem_parent}. Next, following from Assumption~\ref{ass_qn}-(ii) and Algorithm~\ref{alg_ac}, since the network $\mc{G}$ is $q$-node-connected it holds strong connectivity after removal of any set of $q$ sensor nodes. The strong-connectivity of $\mc{G}$ is sufficient for distributed $(W \otimes A, D_C)$-observability\footnote{Note that global $(A,C)$-observability is a necessary condition for any distributed or centralized estimation setup. On the other hand, distributed $(W \otimes A, D_C)$-observability defines the sufficient condition on the sensor network $\mc{G}$, i.e., for observability of system matrix $A$ to any sensor over the network $\mc{G}$. }. Then, the estimator  \eqref{eq_p}-\eqref{eq_m} is Schur stable. This completes the proof.    
\end{proof}	

\subsection{Discussions} \label{sec_disc}
In this subsection, some noteworthy comments are provided to highlight the advantages and limitations of the proposed distributed resilient setup.
\begin{table*} [bpt!] 
	\centering
	\caption{Comparing estimation methods in terms of assumption, design, and number of communication and consensus (computation) iterations for a sensor network of size $m$. }
		\label{tab_compare}
		\begin{tabular}{|c|c|c|c|c|c|} 
			\hline
			Literature &	time-scale & observability & communication & consensus & resilient design  \\
			\hline
			this work &	single & global-$(A,C)$ &  $m \times 1$ & $1$ & $\checkmark$
			\\			
			\hline
			\cite{kar2013consensus,das2015distributed,Silm2020A,khan2010connectivity,chen2018internet} &	single &  local-$(A,C_i)$ &  $3m \times 1$   & $1$ & -
			\\			
			\hline 
			\cite{he2020secure} &	double & global-$(A,C)$ &  $m \times L$    & $L$ & $\checkmark$
			\\
			\hline 
			\cite{battilotti2021stability,olfati2007distributed} &	double & global-$(A,C)$ &  $m \times L$    & $L$ & -
			\\
			\hline
	\end{tabular}
\end{table*}
\begin{itemize}
	\item There exist distributed observer-based techniques to detect and isolate faulty (or attacked) sensors in distributed estimation setup. These algorithms allow to \textit{locally} find the biased sensory information over the network and prevent the cascade of faulty data to the rest of the sensor network. See \cite{teixeira2014distributed,tcns20,hajshirmohamadi2019event,davoodi2013distributed,ferrari2011distributed} for some example localized FDI techniques.
	\item The proposed distributed estimator is in single time-scale with no inner consensus loop (as illustrated in Fig.~\ref{fig_scale}). In other words, based on the estimator \eqref{eq_p}-\eqref{eq_m}, every sensor node only performs one \textit{single} step of consensus/averaging and information-exchange between every two consecutive samples of system dynamics. This considerably reduces the computation and communication load on sensors. Therefore, the proposed resilient estimator is more efficient in terms of computation and communication as compared to the existing double time-scale estimators \cite{he2020secure,battilotti2021stability,olfati2007distributed}. 
	Moreover, The proposed estimator \eqref{eq_p}-\eqref{eq_m} only requires distributed $(W\otimes A, D_C)$-observability with no need for local observability in the neighbourhood of any sensor. This relaxes the observability and subsequently connectivity requirement as compared to many existing single time-scale estimators \cite{kar2013consensus,das2015distributed,Silm2020A,khan2010connectivity,chen2018internet}. This sets the foundation for our resilient $q$-redundant observer design. Detailed comparison with these existing works is presented in Table~\ref{tab_compare}.
   
	\item Cost-optimization techniques can be used to optimally design the $q$-redundant sensor network. In this approach, every link is associated with a cost and certain assignment techniques can be used to optimally assign the linking over the network. 
	For example, wireless sensor networks often operate in resource-constrained environments, where factors such as limited battery life, processing power, and communication bandwidth are critical \cite{ferentinos2007adaptive,xu2005optimal}. Cost-optimal design of such networks aims to maximize the utility of the limited communication resources while maintaining sufficient connectivity for resilient estimation. 
	\item The convergence of the proposed estimator does not depend on the exact consensus weights in $W$ matrix as far as its row-stochasticity is satisfied. Therefore, different approaches for the stochastic weight design can be adopted; namely, Metropolis-Hastings \cite{xiao2007distributed}, quantized averaging \cite{aysal2008distributed}, or ADMM-based \cite{erseghe2011fast}.    
	\item The proposed resilient estimator design, in contrast to the centralized estimation techniques \cite{Ding2022Resilient}, is \textit{scalable}. Therefore, one may accommodate varying deployment sizes and spatial coverage. This allows application in large-scale systems as in smart cities, IoT, or industrial automation \cite{chen2018internet}, where scalable sensor networks can accommodate the diverse needs of different applications.  
	\item Although efficient algorithm exists to design $q$-node-connected, determining whether a graph is $q$-node-connected is an NP-complete problem \cite{Qnode}. Several \textit{heuristic} algorithms and approximation techniques exist to find approximate solutions or to verify $q$-node-connectivity efficiently, see \cite{cheriyan2001rooted} and the related work on Hamiltonian cycles \cite{Kang1991Hamiltonian} for example. Algorithms for finding $q$-link-connected graphs typically involve exploring all possible node pairs and finding $q$ disjoint paths between them. This can be achieved using graph-theoretic techniques such as DFS or breadth-first search (BFS) combined with backtracking or dynamic programming \cite{algorithm,Boyd2022Approximation}.    
\end{itemize}

\section{Simulations} \label{sec_sim}
\begin{figure}[]
	\centering
	\includegraphics[width=3.5in]{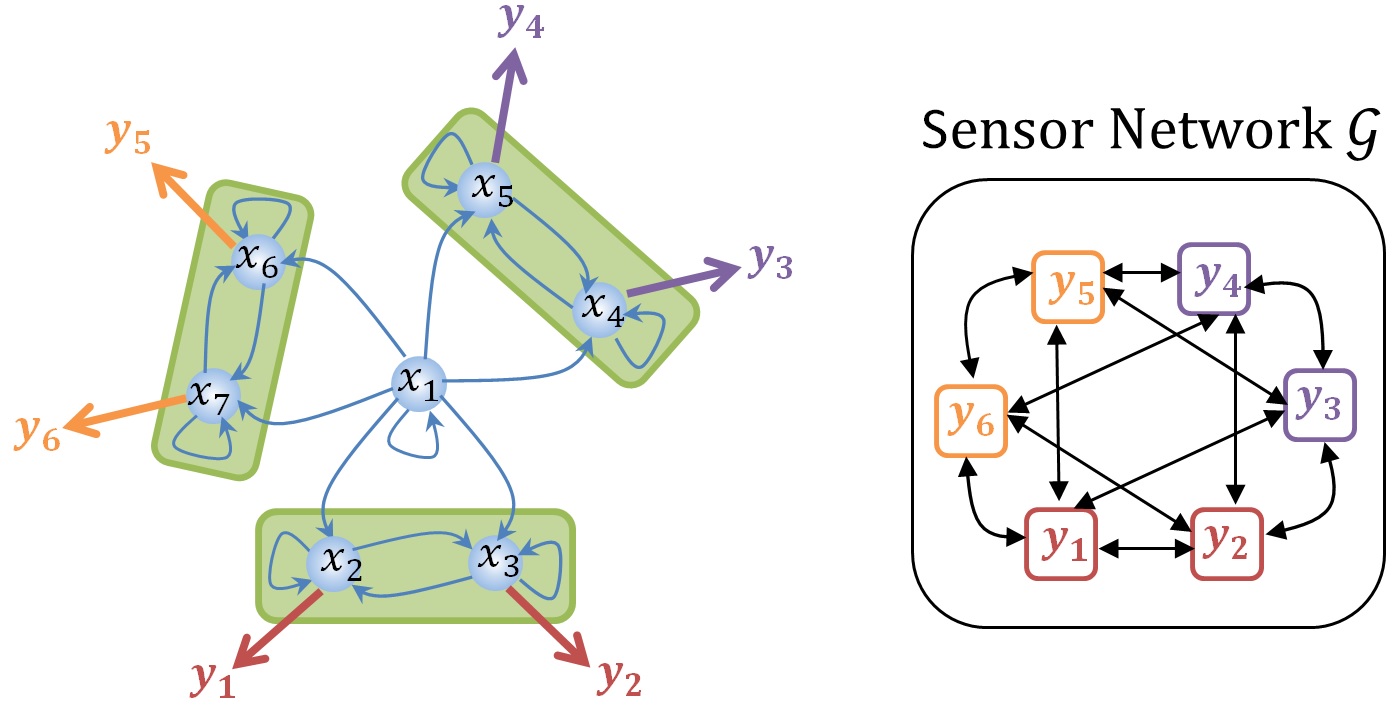}
	\caption{(Left) This figure shows an example system graph $\mc{A}$ with three sets of parent SCCs of size two and a pair of sensor measurements from each parent SCC; recall from Definition~\ref{defn_equiv} that outputs of the same colour are observationally equivalent. (Right) The example sensor network to monitor this system graph is $2$-node-connected and $2$-link-connected. Thus, the distributed estimation network is resilient to the failure of $1$ sensor node (due to the size of parent SCCs and equivalent set of sensor outputs) or the failure of $2$ communication links. }
	\label{fig_graph_net}
\end{figure}
We consider the system graph $\mc{A}$ in Fig.~\ref{fig_graph_net}-(Left) for simulation and set its system parameters as random values. The system and measurement noise are set as Gaussian noise $\mc{N}(0,0.1)$. The initial state values for the estimator are set randomly. As shown in the figure, this system graph includes three sets of parent SCCs each with one pair of observationally equivalent measurements. Each measurement is assigned to a sensor. The sensor network $\mc{G}$ to track the states of the system graph is shown in Fig.~\ref{fig_graph_net}-(Right). The consensus weights in $W$ are chosen randomly while satisfying row-stochasticity. The local (block-diagonal) estimator gain is designed based on iterative LMI in Algorithm~\ref{alg_1}. To solve this optimization problem we use the CVX toolbox in MATLAB. The proposed distributed estimation setup in Algorithm~\ref{alg_ac} is used over the weight-stochastic network $\mc{G}$ to estimate all the states of the underlying system $A$. By sharing estimation data over  $\mc{G}$,  mean-square-errors (MSE) at all sensor nodes under the proposed estimator \eqref{eq_p}-\eqref{eq_m} are shown in Fig. \ref{fig_sim_no}. 

\begin{figure}[]
	\centering
	\includegraphics[width=1.75in]{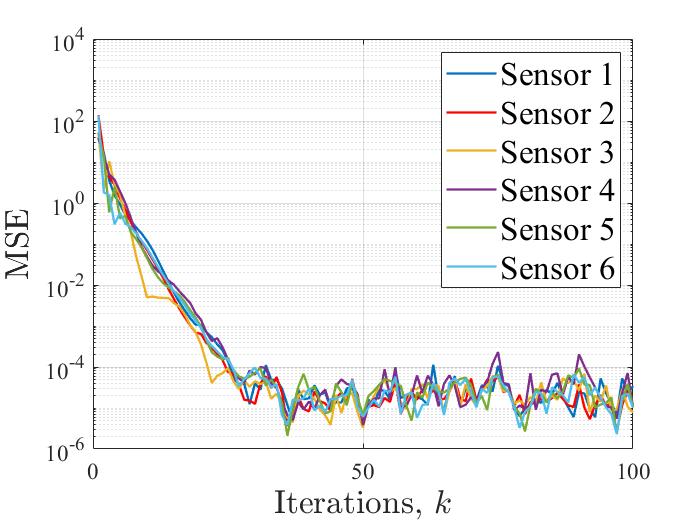}
	\caption{The evolution of MSE at six sensors tracking the state of system graph in Fig.~\ref{fig_graph_net}. }
	\label{fig_sim_no}
\end{figure}

\textbf{Comparison with Literature:} Next, we simulate the double time-scale distributed estimator in \cite{he2020secure} for comparison over the same setup. For this estimator, following Fig.~\ref{fig_scale}, the sensors perform $L$ extra steps of communication, averaging, and consensus-fusion between every two consecutive iterations $k$ and $k+1$ of the system dynamics. The parameters of \cite{he2020secure} are set as $\alpha = 0.5$, $\beta = 0.5$, and $k_{sat}=1$. We considered two values of $L=4$ and $L=20$ steps and presented the simulations in Fig~\ref{fig_mse_comp}. As claimed in \cite{he2020secure} and from the simulations, the error performance closely depends on the number of averaging steps $L$, which needs to be sufficiently large. Recall that, for this scenario, sensors need to share data and perform consensus $L$ \textit{times faster} than the rate of the system dynamics. This clearly requires more costly communication/computation accessories at the sensor network. Therefore, there is a trade-off on the parameter $L$ in terms of error performance and communication/computation load on sensors. As compared to our setup with only $1$ step of data sharing and consensus, the $L$ times more number of averaging and consensus in \cite{he2020secure} considerably increases the computational complexity of the algorithm and information exchange among the sensors.   
\begin{figure}[]
	\centering
	\includegraphics[width=1.73in]{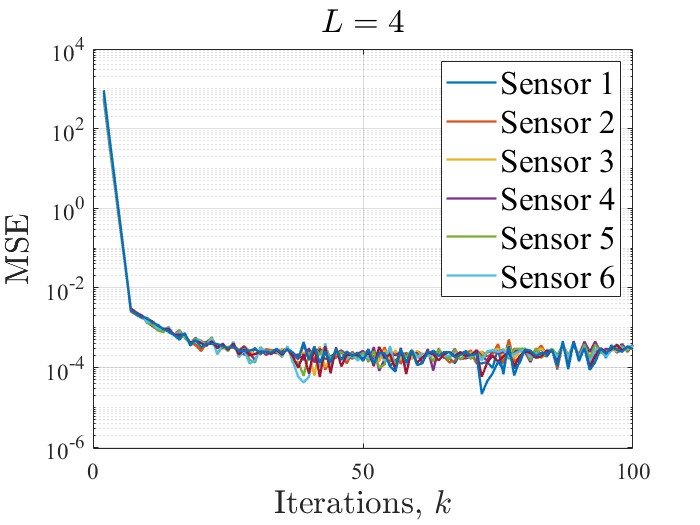}
	\includegraphics[width=1.73in]{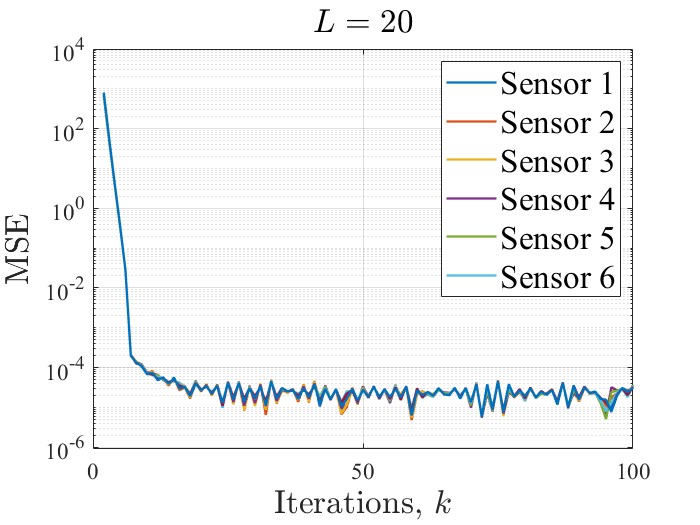}
	\caption{The time-evolution of MSE at sensors under double time-scale estimator in \cite{he2020secure}: (Right) for $L=4$ steps of consensus/averaging and (Left) for $L=20$ steps of consensus/averaging.}  
	\label{fig_mse_comp}
\end{figure}

\textbf{Simulation under node/link Failure:} Note that, from the definition, the sensor network $\mc{G}$ is $3$-node-connected and $3$-link-connected. Therefore, it preserves its connectivity after the removal of up to $3$ links or $3$ sensor nodes. The following remark is noteworthy.
\begin{rem}
	Although the sensor network $\mc{G}$ is $3$-node-connected, it is resilient to failure/removal of only one sensor node. This is because there are only two observationally equivalent sensors with outputs from parent SCCs. In other words, for resilience to node removal, both parts (i) and (ii) of Assumption~\ref{ass_qn} must be satisfied. 
\end{rem}
We redo the simulation for two cases: (i) after deleting $3$ links $\{(1,2),(2,4),(3,4)\}$ (randomly chosen) and (ii) after randomly removing one sensor (and its communication links) from the network $\mc{G}$. The simulations for these are shown in Fig.~\ref{fig_sim_del}. As it can be seen, the MSE evolution is stable for both cases which verifies our resilient design. This implies that, although some data-sharing channels and sensor observations are lost, the remaining sensor network and measurements are sufficient for estimating the underlying system. Note that the convergence rate depends on $\rho(\widehat{A})$ which in turn depends on both $K$ matrix and $W$ matrix. Therefore, the rate of error decay at sensors may change for different choices of these two matrices.   
\begin{figure}[]
	\centering
	\includegraphics[width=1.73in]{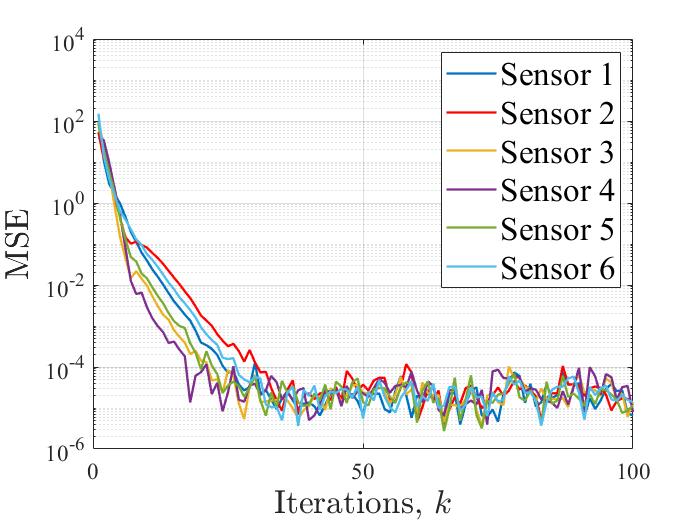}
	\includegraphics[width=1.73in]{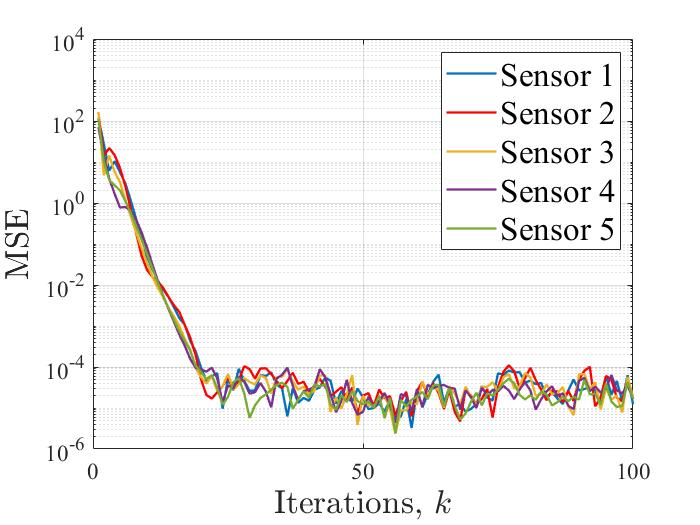}
	\caption{The evolution of MSE at sensors tracking the same system graph in Fig.~\ref{fig_graph_net}: (Left) after three link removal and (Right) after one sensor node removal. The stable MSE approves our resilient design.  }
	\label{fig_sim_del}
\end{figure}

\textbf{Large-scale Example:} 
To substantiate the scalability of the proposed distributed estimator, we consider a large-scale example system composed of the $10$ replications of the system graph example $\mc{A}$ in Fig.~\ref{fig_graph_net}-(Left). Therefore, this large-scale system graph includes $30$ parent SCCs. We considered $m=60$ sensors over a resilient $3$-node/link-connected sensor network to estimate the $n=70$ states of the underlying dynamical system with $A \in \mathbb{R}^{70 \times 70}$. The time evolution of the MSE at all sensors is shown in Fig.~\ref{fig_sim_large}. Recall that both cone-complementary-based Algorithm~\ref{alg_1} for designing $K$ matrix and the distributed estimation algorithm~\ref{alg_ac} are of polynomial order complexity. This makes it applicable in large-scale setups like the one provided here.
\begin{figure}[]
	\centering
	\includegraphics[width=1.75in]{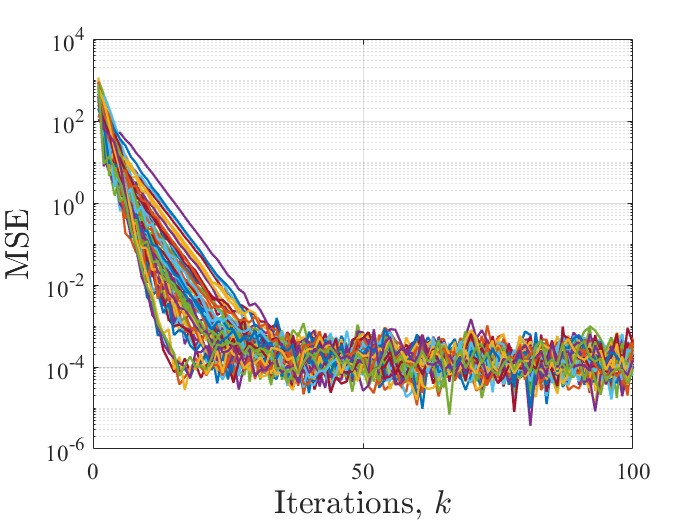}
	\caption{The evolution of MSE at $m=60$ sensors tracking the state of a large-scale system graph example of $n=70$ states.} 
	\label{fig_sim_large}
\end{figure}

\section{Conclusions} \label{sec_con}
\subsection{Concluding Remarks}
This work presents a distributed resilient estimator based on the notions of $q$-connectivity and observational equivalence. We first derive the distributed observability condition of the proposed algorithm and relate it to the strong-connectivity of the sensor network. This considerably relaxes the connectivity requirement in many existing literature. Then, $q$-redundant-connectivity is considered for the resilient design of the sensor network.  The designed estimator tolerates the removal/deletion of up to $q$ sensor nodes or communication links.
  
\subsection{Future Research}
The notion of redundant observers allows for the optimal design of sensor networks. In other words, one can optimally choose (in terms of cost and performance) among the observationally equivalent sensor measurements and redundant communication links. The proposed resilient design can be applied to distributed target tracking and intelligent transportation systems (vehicle platooning) as future research directions.

\bibliographystyle{IEEEbib}
\bibliography{bibliography}

\begin{thebibliography}{10}

\bibitem{sundaram_2021resilient}
A.~Mitra, F.~Ghawash, S.~Sundaram, and W.~Abbas,
\newblock ``On the impacts of redundancy, diversity, and trust in resilient
  distributed state estimation,''
\newblock {\em IEEE Transactions on Control of Network Systems}, 2021.

\bibitem{PIRANI2023111264}
M.~Pirani, A.~Mitra, and S.~Sundaram,
\newblock ``Graph-theoretic approaches for analyzing the resilience of
  distributed control systems: A tutorial and survey,''
\newblock {\em Automatica}, vol. 157, pp. 111264, 2023.

\bibitem{battilotti2018distributed}
S.~Battilotti, F.~Cacace, M.~d’Angelo, and A.~Germani,
\newblock ``Distributed kalman filtering over sensor networks with unknown
  random link failures,''
\newblock {\em IEEE Control Systems Letters}, vol. 2, no. 4, pp. 587--592,
  2018.

\bibitem{mohammadi2015distributed}
A.~Mohammadi and A.~Asif,
\newblock ``Distributed consensus $+ $ innovation particle filtering for
  bearing/range tracking with communication constraints,''
\newblock {\em IEEE Transactions on Signal Processing}, vol. 63, no. 3, pp.
  620--635, 2015.

\bibitem{tase}
M.~Doostmohammadian, A.~Taghieh, and H.~Zarrabi,
\newblock ``Distributed estimation approach for tracking a mobile target via
  formation of {UAVs},''
\newblock {\em IEEE Transactions on Automation Science and Engineering}, vol.
  19, no. 4, pp. 3765--3776, 2021.

\bibitem{ennasr2016distributed}
O.~Ennasr, G.~Xing, and X.~Tan,
\newblock ``Distributed time-difference-of-arrival {(TDOA)}-based localization
  of a moving target,''
\newblock in {\em IEEE 55th Conference on Decision and Control}. IEEE, 2016,
  pp. 2652--2658.

\bibitem{ennasr2018distributed}
O.~Ennasr and X.~Tan,
\newblock ``Distributed localization of a moving target: Structural
  observability-based convergence analysis,''
\newblock in {\em American Control Conference}. IEEE, 2018, pp. 2897--2903.

\bibitem{teixeira2014distributed}
A.~Teixeira, I.~Shames, H.~Sandberg, and K.~H. Johansson,
\newblock ``Distributed fault detection and isolation resilient to network
  model uncertainties,''
\newblock {\em IEEE transactions on Cybernetics}, vol. 44, no. 11, pp.
  2024--2037, 2014.

\bibitem{tcns20}
M.~Doostmohammadian and N.~Meskin,
\newblock ``Sensor fault detection and isolation via networked estimation:
  Full-rank dynamical systems,''
\newblock {\em IEEE Transactions on Control of Network Systems}, vol. 8, no. 2,
  pp. 987--996, 2021.

\bibitem{hajshirmohamadi2019event}
S.~Hajshirmohamadi, F.~Sheikholeslam, M.~Davoodi, and N.~Meskin,
\newblock ``Event-triggered simultaneous fault detection and tracking control
  for multi-agent systems,''
\newblock {\em International Journal of Control}, vol. 92, no. 8, pp.
  1928--1944, 2019.

\bibitem{davoodi2013distributed}
M.~Davoodi, K.~Khorasani, H.~A. Talebi, and H.~R. Momeni,
\newblock ``Distributed fault detection and isolation filter design for a
  network of heterogeneous multiagent systems,''
\newblock {\em IEEE Transactions on Control Systems Technology}, vol. 22, no.
  3, pp. 1061--1069, 2013.

\bibitem{ferrari2011distributed}
R.~M.~G. Ferrari, T.~Parisini, and M.~M. Polycarpou,
\newblock ``Distributed fault detection and isolation of large-scale
  discrete-time nonlinear systems: An adaptive approximation approach,''
\newblock {\em IEEE Transactions on Automatic Control}, vol. 57, no. 2, pp.
  275--290, 2011.

\bibitem{deng2023distributed}
Y.~Deng, Z.~Huang, Y.~Jia, Y.~Xu, and P.~Shi,
\newblock ``Distributed estimation and smoothing for linear dynamic systems
  over sensor networks,''
\newblock {\em IEEE Transactions on Circuits and Systems I: Regular Papers},
  2023.

\bibitem{wu2021secure}
H.~Wu, B.~Zhou, and C.~Zhang,
\newblock ``Secure distributed estimation against data integrity attacks in
  internet-of-things systems,''
\newblock {\em IEEE Transactions on Automation Science and Engineering}, vol.
  19, no. 3, pp. 2552--2565, 2021.

\bibitem{jin2023new}
L.~Jin, J.~Zhang, Y.~Qi, and S.~Li,
\newblock ``New distributed consensus schemes with time delays and output
  saturation,''
\newblock {\em IEEE Transactions on Automation Science and Engineering}, 2023.

\bibitem{liu2023fully}
D.~Liu, D.~Ye, and X.~Zhao,
\newblock ``Fully distributed secure state estimation for nonlinear multi-agent
  systems against dos attacks: An edge-pinning-based method,''
\newblock {\em IEEE Transactions on Automation Science and Engineering}, 2023.

\bibitem{zhu2022optimal}
M.~Zhu, R.~Wang, and X.~Sun,
\newblock ``The optimal distributed weighted least-squares estimation in finite
  steps for networked systems,''
\newblock {\em IEEE Transactions on Circuits and Systems II: Express Briefs},
  vol. 70, no. 3, pp. 1069--1073, 2022.

\bibitem{kar2013consensus}
S.~Kar and J.~M.~F. Moura,
\newblock ``Consensus+ innovations distributed inference over networks:
  cooperation and sensing in networked systems,''
\newblock {\em IEEE Signal Processing Magazine}, vol. 30, no. 3, pp. 99--109,
  2013.

\bibitem{das2015distributed}
S.~Das and J.~M.~F. Moura,
\newblock ``Distributed kalman filtering with dynamic observations consensus,''
\newblock {\em IEEE Transactions on Signal Processing}, vol. 63, no. 17, pp.
  4458--4473, 2015.

\bibitem{Silm2020A}
H.~Silm, D.~Efimov, W.~Michiels, R.~Ushirobira, and J.~Richard,
\newblock ``A simple finite-time distributed observer design for linear
  time-invariant systems,''
\newblock {\em Systems and Control Lettters}, vol. 141, pp. 104707, 2020.

\bibitem{khan2010connectivity}
U.~A. Khan, S.~Kar, A.~Jadbabaie, and J.~M.~F. Moura,
\newblock ``On connectivity, observability, and stability in distributed
  estimation,''
\newblock in {\em 49th IEEE Conference on Decision and Control}. IEEE, 2010,
  pp. 6639--6644.

\bibitem{chen2018internet}
Y.~Chen, S.~Kar, and J.~M.~F. Moura,
\newblock ``The internet of things: Secure distributed inference,''
\newblock {\em IEEE Signal Processing Magazine}, vol. 35, no. 5, pp. 64--75,
  2018.

\bibitem{acc13}
M.~Doostmohammadian and U.~A. Khan,
\newblock ``Topology design in networked estimation: a generic approach,''
\newblock in {\em American Control Conference}, Washington, DC, Jun. 2013, pp.
  4140--4145.

\bibitem{he2020secure}
X.~He, X.~Ren, H.~Sandberg, and K.~H Johansson,
\newblock ``How to secure distributed filters under sensor attacks,''
\newblock {\em IEEE Transactions on Automatic Control}, vol. 67, no. 6, pp.
  2843--2856, 2021.

\bibitem{battilotti2021stability}
S.~Battilotti, F.~Cacace, and M.~d’Angelo,
\newblock ``A stability with optimality analysis of consensus-based distributed
  filters for discrete-time linear systems,''
\newblock {\em Automatica}, vol. 129, pp. 109589, 2021.

\bibitem{olfati2007distributed}
R.~Olfati-Saber,
\newblock ``Distributed kalman filtering for sensor networks,''
\newblock in {\em 46th IEEE Conference on Decision and Control}. IEEE, 2007,
  pp. 5492--5498.

\bibitem{galvez2021cycle}
W.~G{\'a}lvez, F.~Grandoni, A.~Jabal~Ameli, and K.~Sornat,
\newblock ``On the cycle augmentation problem: hardness and approximation
  algorithms,''
\newblock {\em Theory of Computing Systems}, vol. 65, pp. 985--1008, 2021.

\bibitem{byrka2020breaching}
J.~Byrka, F.~Grandoni, and A.~Jabal~Ameli,
\newblock ``Breaching the 2-approximation barrier for connectivity
  augmentation: a reduction to steiner tree,''
\newblock in {\em 52nd Annual ACM SIGACT Symposium on Theory of Computing},
  2020, pp. 815--825.

\bibitem{cecchetto2021bridging}
F.~Cecchetto, V.~Traub, and R.~Zenklusen,
\newblock ``Bridging the gap between tree and connectivity augmentation:
  unified and stronger approaches,''
\newblock in {\em 53rd Annual ACM SIGACT Symposium on Theory of Computing},
  2021, pp. 370--383.

\bibitem{galvez2021approximation}
W.~G{\'a}lvez, F.~Sanhueza-Matamala, and J.~A. Soto,
\newblock ``Approximation algorithms for vertex-connectivity augmentation on
  the cycle,''
\newblock in {\em International Workshop on Approximation and Online
  Algorithms}. Springer, 2021, pp. 1--22.

\bibitem{gupta2011approximation}
A.~Gupta and J.~K{\"o}nemann,
\newblock ``Approximation algorithms for network design: A survey,''
\newblock {\em Surveys in Operations Research and Management Science}, vol. 16,
  no. 1, pp. 3--20, 2011.

\bibitem{icassp2016}
M.~Doostmohammadian and U.~A. Khan,
\newblock ``Measurement partitioning and observational equivalence in state
  estimation,''
\newblock in {\em IEEE Conference on Acoustics, Speech and Signal Processing},
  2016, pp. 4855--4859.

\bibitem{acc13_kar}
S.~Pequito, S.~Kar, and A.~P. Aguiar,
\newblock ``A structured systems approach for optimal actuator-sensor placement
  in linear time-invariant systems,''
\newblock in {\em American Control Conference}, 2013, pp. 6123--6128.

\bibitem{liu-pnas}
Y.~Y. Liu, J.~J. Slotine, and A.~L. Barab\'{a}si,
\newblock ``Observability of complex systems,''
\newblock {\em Proceedings of the National Academy of Sciences}, vol. 110, no.
  7, pp. 2460--2465, 2013.

\bibitem{godsil}
C.~Godsil and G.~Royle,
\newblock {\em Algebraic graph theory},
\newblock New York: Springer, 2001.

\bibitem{chen2022survivable}
H.~Chen, X.~Wang, Z.~Liu, Z.~Li, and L.~Shen,
\newblock ``Survivable networks for consensus,''
\newblock {\em IEEE Transactions on Control of Network Systems}, vol. 9, no. 2,
  pp. 588--600, 2022.

\bibitem{lau2007survivable}
L.~Lau, J.~Naor, M.~R. Salavatipour, and M.~Singh,
\newblock ``Survivable network design with degree or order constraints,''
\newblock in {\em Proceedings of the 39th annual ACM symposium on theory of
  computing}, 2007, pp. 651--660.

\bibitem{tnse18}
M.~Doostmohammadian, H.~R. Rabiee, H.~Zarrabi, and U.~Khan,
\newblock ``Observational equivalence in system estimation: Contractions in
  complex networks,''
\newblock {\em IEEE Transactions on Network Science and Engineering}, vol. 5,
  no. 3, pp. 212--224, 2018.

\bibitem{pearce2016space}
D.~J. Pearce,
\newblock ``A space-efficient algorithm for finding strongly connected
  components,''
\newblock {\em Information Processing Letters}, vol. 116, no. 1, pp. 47--52,
  2016.

\bibitem{hagerup2020space}
T.~Hagerup,
\newblock ``Space-efficient dfs and applications to connectivity problems:
  simpler, leaner, faster,''
\newblock {\em Algorithmica}, vol. 82, no. 4, pp. 1033--1056, 2020.

\bibitem{algorithm}
T.~H. Cormen, C.~E. Leiserson, R.~L. Rivest, and C.~Stein,
\newblock {\em {Introduction to Algorithms}},
\newblock MIT Press, 2009.

\bibitem{xiao2007distributed}
L.~Xiao, S.~Boyd, and S.~Kim,
\newblock ``Distributed average consensus with least-mean-square deviation,''
\newblock {\em Journal of parallel and distributed computing}, vol. 67, no. 1,
  pp. 33--46, 2007.

\bibitem{bay}
J.~Bay,
\newblock {\em Fundamentals of linear state space systems},
\newblock McGraw-Hill, 1999.

\bibitem{tsipn}
M.~Doostmohammadian and U.~A. Khan,
\newblock ``Minimal sufficient conditions for structural
  observability/controllability of composite networks via kronecker product,''
\newblock {\em IEEE Transactions on Signal and Information processing over
  Networks}, vol. 6, pp. 78--87, 2019.

\bibitem{el1997cone}
L.~El~Ghaoui, F.~Oustry, and M.~AitRami,
\newblock ``A cone complementarity linearization algorithm for static
  output-feedback and related problems,''
\newblock {\em IEEE Transactions on Automatic Control}, vol. 42, no. 8, pp.
  1171--1176, 1997.

\bibitem{usman_cdc:11}
U.~A. Khan and A.~Jadbabaie,
\newblock ``Coordinated networked estimation strategies using structured
  systems theory,''
\newblock in {\em 49th IEEE Conference on Decision and Control}, Orlando, FL,
  Dec. 2011, pp. 2112--2117.

\bibitem{mangasarian1995extended}
O.~Mangasarian and J.~Pang,
\newblock ``The extended linear complementarity problem,''
\newblock {\em SIAM Journal on Matrix Analysis and Applications}, vol. 16, no.
  2, pp. 359--368, 1995.

\bibitem{pajic2010wireless}
M.~Pajic, S.~Sundaram, J.~Le~Ny, G.~J. Pappas, and R.~Mangharam,
\newblock ``The wireless control network: Synthesis and robustness,''
\newblock in {\em 49th IEEE Conference on Decision and Control}. IEEE, 2010,
  pp. 7576--7581.

\bibitem{ferentinos2007adaptive}
K.~P. Ferentinos and T.~A. Tsiligiridis,
\newblock ``Adaptive design optimization of wireless sensor networks using
  genetic algorithms,''
\newblock {\em Computer Networks}, vol. 51, no. 4, pp. 1031--1051, 2007.

\bibitem{xu2005optimal}
K.~Xu, Q.~Wang, H.~Hassanein, and G.~Takahara,
\newblock ``Optimal wireless sensor networks {(WSNs)} deployment: minimum cost
  with lifetime constraint,''
\newblock in {\em IEEE International Conference on Wireless And Mobile
  Computing, Networking And Communications}. IEEE, 2005, vol.~3, pp. 454--461.

\bibitem{aysal2008distributed}
T.~C. Aysal, M.~J. Coates, and M.~G. Rabbat,
\newblock ``Distributed average consensus with dithered quantization,''
\newblock {\em IEEE transactions on Signal Processing}, vol. 56, no. 10, pp.
  4905--4918, 2008.

\bibitem{erseghe2011fast}
T.~Erseghe, D.~Zennaro, E.~Dall'Anese, and L.~Vangelista,
\newblock ``Fast consensus by the alternating direction multipliers method,''
\newblock {\em IEEE Transactions on Signal Processing}, vol. 59, no. 11, pp.
  5523--5537, 2011.

\bibitem{Ding2022Resilient}
D.~Derui, L.~Huanyi, D.~Hongli, and L.~Hongjian,
\newblock ``Resilient filtering of nonlinear complex dynamical networks under
  randomly occurring faults and hybrid cyber-attacks,''
\newblock {\em IEEE Transactions on Network Science and Engineering}, vol. 9,
  pp. 2341--2352, 2022.

\bibitem{Qnode}
V.~K. Garg,
\newblock ``Q-node connectivity in graphs,''
\newblock {\em Journal of Algorithms}, vol. 13, no. 2, pp. 298--309, 1992.

\bibitem{cheriyan2001rooted}
J.~Cheriyan, T.~Jord{\'a}n, and Z.~Nutov,
\newblock ``On rooted node-connectivity problems,''
\newblock {\em Algorithmica}, vol. 30, pp. 353--375, 2001.

\bibitem{Kang1991Hamiltonian}
Feng K. and Qin M.,
\newblock ``Hamiltonian algorithms for hamiltonian systems and a comparative
  numerical study,''
\newblock {\em Computer Physics Communications}, vol. 65, pp. 173--187, 1991.

\bibitem{Boyd2022Approximation}
S.~P. Boyd, J.~Cheriyan, A.~Haddadan, and S.~Ibrahimpur,
\newblock ``Approximation algorithms for flexible graph connectivity,''
\newblock {\em Mathematical Programming}, pp. 1--24, 2022.

\end{thebibliography}

\begin{IEEEbiography}[{\includegraphics[width=1.1in,clip,keepaspectratio]{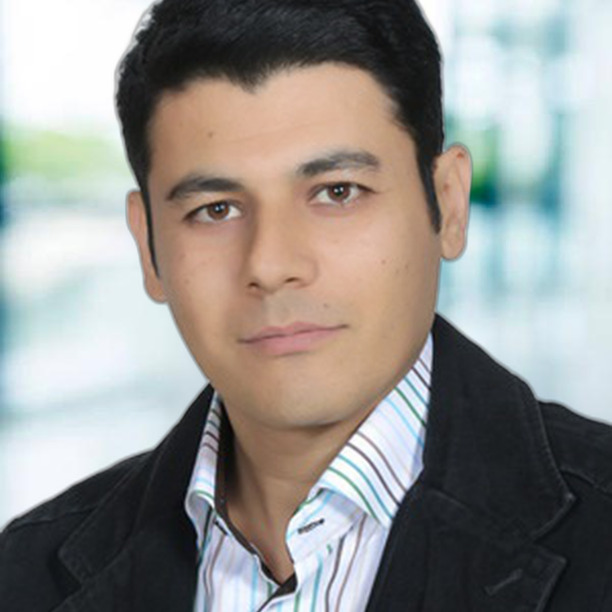}}]{Mohammadreza~Doostmohammadian}
	received his B.Sc. and M.Sc. in Mechanical Engineering from Sharif University of Technology, respectively in 2007 and 2010, where he worked on applications of control systems and robotics. He received his PhD in Electrical and Computer Engineering from Tufts University, MA, USA in 2015. During his PhD in Signal Processing and Robotic Networks (SPARTN) lab, he worked on control and signal processing over networks with applications in social networks. From 2015 to 2017 he was a postdoc researcher at ICT Innovation Center for Advanced Information and Communication Technology (AICT), School of Computer Engineering, Sharif University of Technology, with research on network epidemic, distributed algorithms, and complexity analysis of distributed estimation methods. He was a researcher at Iran Telecommunication Research Center (ITRC), Tehran, Iran in 2017 working on distributed control algorithms and estimation over IoT. Since 2017 he has been an Assistant Professor with the Mechatronics Department at Semnan University,  Iran. He was a visiting researcher at the School of Electrical Engineering and Automation, Aalto University, Espoo, Finland, working on constrained and unconstrained distributed optimization techniques and their applications. His general research interests include distributed estimation, control, learning, and optimization over networks. He was the chair of the robotics and control session at the ISME-2018 conference, the session chair at the 1st Artificial Intelligent Systems Conference of Iran, 2022, and the IPC member and Associate Editor of IEEE/IFAC CoDIT2024 conference.
\end{IEEEbiography}

\begin{IEEEbiography}[{\includegraphics[width=1.1in,clip,keepaspectratio]{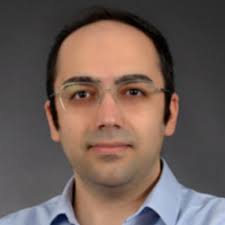}}]{Mohammad~Pirani}
	received the B.Sc. degree in mechanical engineering from the Amirkabir University of Technology in 2011, the M.Sc. degree in electrical and computer engineering, and the Ph.D. degree in mechanical and mechatronics engineering, both from the University of Waterloo in 2014 and 2017, respectively. He is currently an Assistant Professor at the Department of Mechanical Engineering, University of Ottawa, Canada. Before this, he was a Postdoctoral Researcher with the Department of Electrical and Computer Engineering, the University of Toronto. From 2018 to 2019, he was a Postdoctoral Researcher with the Department of Automatic Control, KTH Royal Institute of Technology, Sweden.  His research interests include resilient and fault-tolerant control, networked control systems, and multi-agent systems.
\end{IEEEbiography}
\end{document}